\newtheorem{definition}{Definition}[section]
\newtheorem{lemma}[definition]{Lemma}
\newtheorem{theorem}[definition]{Theorem}
\newtheorem{proposition}[definition]{Proposition}
\newtheorem{corollary}[definition]{Corollary}
\newmdtheoremenv{defeo}{Definition}
\newcommand{\allperm}{\emph{All Permutations Supersequence}\xspace}
\newcommand{\notallperm}{\emph{Permutation Non-subsequence}\xspace} 
\newcommand{\csp}{\emph{Locally Constrained Permutation}\xspace}
\newcommand{\cps}{\emph{Prefix Increasing Permutation}\xspace}
\newcommand{\noncrossing}{\emph{Partially Non-crossing Perfect Matching in Bipartite Graphs}\xspace}
\newclass{\LCP}{LCP}
\newcommand{\csps}{\LCP\xspace}
\newcommand{\Poly}{\P}
\newclass{\threeSAT}{3SAT}
\renewcommand{\SAT}{\threeSAT\xspace}
\newcommand{\true}{\textsc{True}\xspace}
\newcommand{\false}{\textsc{False}\xspace}
\newcommand{\undecided}{\textsc{Undecided}\xspace}
\newcommand{\mem}[1]{\texttt{mem}(#1)}
\newcommand{\clause}[1]{\texttt{clause}(#1)}
\newcommand{\comp}[2]{\texttt{comp#1}(#2)}
\newcommand{\fcell}[1]{\texttt{f}(#1)}
\newcommand{\tcell}[1]{\texttt{t}(#1)}
\newcommand{\allowed}[1]{H_{#1}}
\newcommand{\etal}{{\it et~al.}\xspace}
\newcommand{\eg}{{\it e.g.}\xspace}
\newcommand{\ie}{{\it i.e.}\xspace}
\begin{document}

\author{Przemys\l{}aw Uznański}

\affil{Helsinki Institute for Information Technology HIIT,\\ Department of Computer Science, Aalto University, Finland}

\date{\today}

\title{All Permutations Supersequence is \coNP-complete}
\maketitle

\begin{abstract}
We prove that deciding whether a given input word contains as subsequence every possible permutation of integers $\{1,2,\ldots,n\}$ is \coNP-complete. The \coNP-completeness holds even when given the guarantee that the input word contains as subsequences all of length $n-1$ sequences over the same set of integers. We also show \NP-completeness of a related problem of \noncrossing, \ie to find a perfect matching in an ordered bipartite graph where edges of the matching incident to selected vertices (even only from one side) are non-crossing.

\end{abstract}

\section{Introduction and Preliminaries}

The question of deciding for two words whether one is a \emph{subsequence} of the other is one of the most basic problems in combinatorics. A folklore result states that a greedy solution is correct, and the problem is trivially in \Poly. However, if we consider the related questions of finding the shortest common supersequence or the longest common subsequence (\textsf{LCS}), both have been shown to be \NP-complete when allowed multiple input words, by Maier \cite{Maier:1978:CPS:322063.322075}, improved to binary alphabets for \textsf{LCS} by R{\"{a}}ih{\"{a}} and Ukkonen \cite{superseq-npc}.

The question of constructing the shortest word containing as subsequences all permutations (so called \emph{universal} words), was first posed by Knuth and attributed to Karp \cite{Chvatal:1972:SCR:891957}. More precisely, writing
  $f(n)$ for the length of such a shortest word where $n$ is the size of the alphabet, the question of determining the values of $f(n)$ was investigated (see \cite{oeisA062714} for known exactly values). Independently \cite{opac-b1004216,Adleman:1974:SPS:2625528.2625790,doi:10.1137/0130040,Koutas:1975:SSC:2625519.2625687,Mohanty:1980:SSC:2647973.2648192} provided an upper bound $f(n) \le n^2-2n+4$, while Newey \cite{opac-b1004216} proved that this is tight for $n\le7$. A stronger upper bound $f(n) \le \lceil n^2 - 7/3 n + 19/3 \rceil$ has been recently provided by Radomirovi\'{c} \cite{DBLP:journals/combinatorics/Radomirovic12}. Complementary, Kleitman and Kwiatkowski \cite{DBLP:journals/jct/KleitmanK76} have shown a lower bound of the form $f(n) \ge n^2 - C_{\varepsilon} n^{7/4+\varepsilon}$ for $\varepsilon > 0$. 

In this paper, we investigate the problem of deciding whether a given sequence is universal. The question on the hardness of this problem was first, to our knowledge, posed by Amarilli \cite{cstheory}. We prove that this problem is \coNP-complete, that is, a counterexample to the universality of any given word can be verified in polynomial time, but unless \Poly = \NP, no efficient algorithm exists to verify universality itself. Our result thus proves a separation between the problems of testing universality and testing whether an input word contains every word of given length (not only ones using distinct characters), with former being \coNP-complete and the latter being in \Poly~by a simple greedy algorithm.\footnote{The idea of the algorithm is as follows: iteratively take a letter of the alphabet for which the earliest occurrence after the current position is as far to the right as possible.}

In order to prove the \coNP-hardness of \allperm (see Definition~\ref{allpermdef}), we introduce the intermediate problem of \csp (see Definition~\ref{permdef}), which captures the essential hardness of former, while itself being much easier to work with. This problem, where we ask to reconstruct a permutation given sets of available values for each given position, and a list of linear order constraints which every pair of consecutive positions has to satisfy, falls into a larger category of \NP-complete problems involving permutation reconstruction. Similar problems were considered, \eg permutation reconstruction from differences (De Biasi \cite{DBLP:journals/combinatorics/Biasi14}) and recognizing sum of two permutations (Yu \etal~\cite{permutationsums}).

The reduction proving hardness of \csp can be shown to provide a very restricted instances. This fact, coupled with interpretation of permutations as perfect matchings in bipartite graphs, immediately provides us with a \NP-hardness result for \noncrossing (see Definition~\ref{nonc}). The problem of finding maximal non-crossing matching in bipartite graph, has been proposed and extended by Widmayer and Wong~\cite{DBLP:journals/ipl/WidmayerW85}, and the problem itself reduces to longest increasing subsequence, which is solvable in polynomial time (Fredman~\cite{Fredman197529}). Our result shows that lifting some non-crossing restrictions increases the computational complexity of the problem.

\paragraph{Notation.} In this paper we will denote permutations using greek letter $\pi$. To ease the notation, we will use $\pi$ both for the function $\pi\colon \{1,2,\ldots,n\} \to \{1,2,\ldots,n\}$ and for the word $\pi = \pi_1 \pi_2 \ldots \pi_n$. The set of all permutations of set $\{1, \ldots, n\}$ will be denoted as $S_n$. Given word $w$, we will write $w^R$ to denote $w$ reversed. For two words, we will write $w \sqsubseteq v$ meaning that $w$ is a subsequence of $v$. Given a set $S$, a linear order on $S$ is any binary relation $\prec$ such that for any two distinct $x,y \in S$ exactly one of $x \prec y$ or $y \prec x$ holds, and additionally $x \not\prec x$.

Now we are ready to formally define the problem that was already mentioned previously:

\begin{defeo}{\textnormal{\allperm}:}
\label{allpermdef}
\ \\\\
\textbf{Input:} Integer $n>0$, word $T$ over alphabet $\{1,\ldots,n\}$.\\\\
\noindent\textbf{Question:} For every $\pi \in S_n$, does it hold that
  $\pi \sqsubseteq T$?
\end{defeo}

An example of a shortest supersequence of all permutations of the set $\{1,2,3\}$ is provided in the Figure~\ref{fig:superseq}, together with placement of all permutations as subsequences.

\begin{figure}[t!]
\centering\includegraphics[scale=0.9]{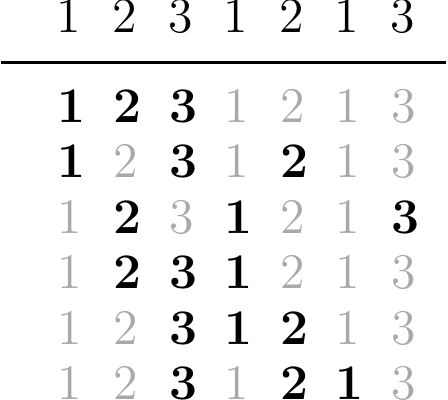}
\caption{The sequence $1,2,3,1,2,1,3$ satisfies \allperm for the set $\{1,2,3\}$. The length of $7$ is minimal.}
\label{fig:superseq}
\end{figure}

\newpage
\section{Locally Constrained Permutation}

First, we formally define the \csps problem.

\begin{defeo}{\textnormal{\csp}:}
\label{permdef}
\ \\\\
\textbf{Input:} Integer $n>0$, $n$ sets $\allowed{1},\ldots,\allowed{n} \subseteq \{1,\ldots,n\}$ and $n-1$ linear orders on $\{1,\ldots,n\}$: $\prec_1,\ldots,\prec_{n-1}$.\\\\
\noindent\textbf{Question:} Is there  $\pi \in S_n$ such that for each $1 \le i \le n$:
$\pi_i \in \allowed{i}$
and for each $1 \le i \le n-1$ we have
$\pi_i \prec_i \pi_{i+1}$?
\end{defeo}

To show the hardness of this problem, we will create an instance of \csps that encodes a given \SAT\ instance. 
First, let us fix an instance of \SAT, which consists of: $m$ variables $v_1,\ldots,v_m$, and $d$ clauses of form $\ell_{i,1} \vee \ell_{i,2} \vee \ell_{i,3}$, where for each $1 \le i \le d, 1 \le j \le 3$: $\ell_{i,j} \in \{v_1, \ldots, v_m, \lnot v_1, \ldots ,\lnot v_m \} $.

We observe that for any given $i$ we don't need to fully specify the full order of $\prec_i$ on all of $\{1,\ldots,n\}$, as it is enough to specify it on $\allowed{i} \cup \allowed{i+1}$ only. We will also specify $\prec$ constraints not on every position, assuming it is possible for two consecutive positions to be under no constraint. Later we will show how to take care of this in \csps encoding.

The important property of \csps problem (and as well of any permutation reconstruction problem) is the fact that values can be used only once. Thus by assigning value to a position we are ``blocking'' this value from future use.

\paragraph{Literal encoding.}
We will encode each literal in a ``memory cell'' gadget. It consists of a separate position in the permutation with only two available choices, each of them corresponding to evaluating the underlying variable such that the literal evaluates to \true or \false, respectively. Thus, the information available can be ``carried'' over a long distance (to other occurrences of the same variable, or to positions evaluating truthfulness of the formula) by the fact that certain value is unblocked. However, this information is easily destroyed (one can think of it as read-once type of memory), thus we will need several working copies of the same memory cell. 

Let $p \le 3 d$ be the upper bound on the number of occurrences of a single variable in literals.
Thus, for each literal $\ell_{i,j}$, there will be consecutive positions $\mem{i,j}, \mem{i,j}+1, \ldots, \mem{i,j}+p$, together with distinct values $\fcell{i,j},\fcell{i,j}+1,\ldots,\fcell{i,j}+p$, $\tcell{i,j},\tcell{i,j}+1,\ldots,\tcell{i,j}+p$, such that $\allowed{\mem{i,j}} = \{\fcell{i,j},\tcell{i,j}\}, \ldots, \allowed{\mem{i,j}+p} = \{\fcell{i,j}+p,\tcell{i,j}+p\}$. To enforce proper value copying, we set $\prec_{\mem{i,j}}, \ldots, \prec_{\mem{i,j}+p-1}$ such that:
\begin{align*}
 \fcell{i,j} \prec_{\mem{i,j}} (\fcell{i,j}+1) & \prec_{\mem{i,j}} \tcell{i,j} \prec_{\mem{i,j}} (\tcell{i,j}+1),\\
 (\fcell{i,j}+1) \prec_{\mem{i,j}+1} (\fcell{i,j}+2)  &\prec_{\mem{i,j}+1} (\tcell{i,j}+1) \prec_{\mem{i,j}+1} (\tcell{i,j}+2),\\
&\ldots\\
(\fcell{i,j}+p-1) \prec_{\mem{i,j}+p-1} (\fcell{i,j}+p)  &\prec_{\mem{i,j}+p-1} (\tcell{i,j}+p-1) \prec_{\mem{i,j}+p-1} (\tcell{i,j}+p). 
\end{align*}

Observe, that there is a possibility for a one-sided error, that is assigning \false to $\mem{i,j}$  and \true to $\mem{i,j}+x$. However, those errors are not a problem for us, as they only occur when the literal is evaluated to \false, thus the value of this literal is irrelevant to the evaluation of this clause in satisfying assignment.

\paragraph{Clause evaluation.}
For each clause we add a single position gadget verifying that the clause evaluates to \true. Thus, for $i$-th clause, we have position $\clause{i}$ such that $\allowed{\clause{i}} = \{ \fcell{i,1}, \fcell{i,2}, \fcell{i,3} \}$. Thus, assigning value to position $\clause{i}$ will be possible iff at least one of literals it contains is evaluated to \true. 

\paragraph{Variable values consistency.}
To make sure that different occurrences of the same variable are assigned the same value, we use a literal equality gadget. Furthermore, we will say that a literal is \emph{positive} if it contains the simple variable, and is \emph{negative} if it contains the negated variable.

We iterate over all pairs of literals using the same variable. Let $\ell_{i,j}$ and $\ell_{i',j'}$ be respectively the $k$-th and $k'$-th occurrences of this variable. We will be using the $k'$-th copy of ``memory cell'' gadget of $\ell_{i,j}$ and $k$-th copy of $\ell_{i',j'}$, thus making sure that each copy is used at most once for comparison.

Additionally, for any such pair of literals, there are two unique positions $\comp{1}{i,j,i',j'}$ and $\comp{2}{i,j,i',j'}$ such that:
\begin{itemize}
\item $\allowed{\comp{1}{i,j,i',j'}} = \{\tcell{i,j}+k',\fcell{i',j'}+k\}$ and $\allowed{\comp{2}{i,j,i',j'}} = \{\fcell{i,j}+k',\tcell{i',j'}+k \}$ if both literals are positive or both are negative;
\item $\allowed{\comp{1}{i,j,i',j'}} = \{ \tcell{i,j}+k',\tcell{i',j'}+k\}$ and $\allowed{\comp{2}{i,j,i',j'}} = \{\fcell{i,j}+k',\fcell{i',j'}+k \}$ otherwise.
\end{itemize}

The satisfying assignment to positions $\comp{1}{i,j,i',j'}$ and $\comp{2}{i,j,i',j'}$ is thus possible (in first case) iff both $\tcell{i,j}+k'$ and $\tcell{i',j'}+k$ or both $\fcell{i,j}+k'$ and $\fcell{i',j'}+k$ values were unblocked (and the second case works by analogy).

\paragraph{Balancing the number of positions and values.}
The above construction introduces $n_1$ possible values and $n_2$ positions for some $n_2 \le n_1$. We introduce $n_2-n_1$ new positions $\pi_{n_1+1}, \ldots, \pi_{n_2}$ such that $\allowed{n_1+1}=\ldots=\allowed{n_2}=\{1,\ldots,n_2\}$, so that there exists a permutation $\pi \in S_{n_2}$ satisfying the larger instance iff there exists an injective function $\pi'\colon \{1,\ldots,n_1\} \to \{1, \ldots, n_2\}$ satisfying the original constraints.

\paragraph{Missing constraints.}
We also observe that we can choose not to impose any linear ordering restriction between any consecutive two positions $i$ and $i+1$ by inserting a dummy position between them. Thus, all positions $i+1,\ldots$ are moved one to the right, and a new dummy $i+1$ position is inserted. We also create a unique value $c_i$ such that $\allowed{i+1} = \{c_i\}$ and $c_i \not\in \allowed{j}$ for $j \not= i+1$, and and we construct the new linear
  orders $<_i$ and $<_{i+1}$ such that $c_i$ is the largest value with respect to $\prec_i$ and the smallest value with respect to $\prec_{i+1}$.

Thus all missing linear order constraints can be taken care of iteratively, adding one extra position and value for each.

\begin{lemma}
\label{lem:sat-csp}
An instance of \SAT is satisfiable iff a corresponding \csps instance is satisfiable.
\end{lemma}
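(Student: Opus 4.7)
The plan is to prove both directions of the biconditional; the forward direction is straightforward, while the backward direction requires care in light of the one-sided memory errors that the author already flagged.

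For the forward direction, given a satisfying assignment $\alpha$, I would fill each memory cell uniformly: the $p+1$ positions of the cell for $\ell_{i,j}$ take the values $\tcell{i,j},\ldots,\tcell{i,j}+p$ if $\alpha$ makes $\ell_{i,j}$ true, and $\fcell{i,j},\ldots,\fcell{i,j}+p$ otherwise. The local orderings on memory positions are satisfied because the author's chain explicitly permits $\fcell{i,j}+k \prec \fcell{i,j}+k+1$ and $\tcell{i,j}+k \prec \tcell{i,j}+k+1$. Since $\alpha$ satisfies each clause, at least one of $\fcell{i,1},\fcell{i,2},\fcell{i,3}$ is unused and can be placed at $\clause{i}$. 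A short case analysis on the polarity of the pair of literals (same or opposite) and the value of the shared variable then shows that each pair $\comp{1}{i,j,i',j'},\comp{2}{i,j,i',j'}$ has a legal assignment. The dummy positions from the balancing and missing-constraint constructions absorb the remaining values.

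For the backward direction, I start from a structural observation: each memory cell's assignment under any valid $\pi$ is monotone. The ordering $\fcell{i,j}+k \prec \fcell{i,j}+k+1 \prec \tcell{i,j}+k \prec \tcell{i,j}+k+1$ forces that once $\mem{i,j}+k$ takes value $\tcell{i,j}+k$, the only legal value for $\mem{i,j}+k+1$ is $\tcell{i,j}+k+1$, so each cell's assignment is a (possibly empty) $\fcell$-prefix followed by a (possibly empty) $\tcell$-suffix. Call $\ell_{i,j}$ \emph{active} if $\pi$ places $\tcell{i,j}$ at $\mem{i,j}$; then by monotonicity the whole cell uses $\tcell$ values. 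Since $\clause{i}$ must receive some $\fcell{i,j}$, that value is not used by $\mem{i,j}$, making $\ell_{i,j}$ active. Hence every clause contains an active literal.

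The main obstacle is to show that active literals are consistent across the occurrences of any single variable. Suppose for contradiction that a variable $v$ has an active positive literal $\ell_{i,j}$ (its $k$-th occurrence) and an active negative literal $\ell_{i',j'}$ (its $k'$-th occurrence). By monotonicity, $\mem{i,j}+k'$ uses $\tcell{i,j}+k'$ and $\mem{i',j'}+k$ uses $\tcell{i',j'}+k$, so both values in $\allowed{\comp{1}{i,j,i',j'}} = \{\tcell{i,j}+k',\tcell{i',j'}+k\}$ are already consumed, leaving $\comp{1}{i,j,i',j'}$ with no legal assignment---a contradiction. Hence setting $v = \true$ precisely when $v$ has an active positive literal (and $v = \false$ when it has an active negative literal, arbitrarily otherwise) yields a well-defined truth assignment under which every clause is satisfied by its active literal.
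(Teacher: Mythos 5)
Your proof is correct and follows essentially the same route as the paper: the paper's \true/\false/\undecided classification of memory cells is exactly your monotone ``$\fcell$-prefix then $\tcell$-suffix'' observation, with your ``active'' literals corresponding to the paper's cells assigned \true, and both arguments extract the assignment from the clause gadget forcing an all-$\tcell$ cell and the \texttt{comp} gadgets forbidding simultaneously active occurrences of opposite polarity. If anything, your write-up of the backward direction is slightly more explicit than the paper's.
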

\begin{proof}
For the "only if" direction, it is clear by construction that a solution to the \SAT instance can be used to construct a solution to the \csps instance.

To complete the proof, we need to show how one can reconstruct satisfying assignment to variables in \SAT from a permutation $\pi$ satisfying \csps instance.

Let us iterate over all the literals $\ell_{i,j}$. We will say that a literal is assigned \true (respectively \false) if for every $0\le k \le p$ the corresponding position $\mem{i,j}+k$ in $\pi$ holds value $\tcell{i,j}+k$ (respectively $\fcell{i,j}+k$), and that its value is \undecided otherwise.

Similarly, we will say that an occurrence of variable in literal is assigned value \true (respectively \false, \undecided) if the literal is positive and  is assigned the value of \true (respectively \false, \undecided) or the literal is negative and is assigned the value of \false (respectively \true, \undecided). We observe that for any two occurrences of the same variable, any configuration of values is allowed except one holding \true and another holding \false. However, as in the clause gadget we are using the first copy of any literal, any variable holding \undecided is not helping to evaluate the clause to \true (the corresponding $\mem{i,j}$ is assigned the value $\fcell{i,j}$). 

Consider assignment of values to variables of \SAT as follow: if there exists occurrence holding \true or \false, we assign \true or \false, respectively, and otherwise we assign any value arbitrarily. By previous reasoning, this is a satisfying assignment to the given \SAT instance.

\end{proof}

Clearly, \csps is in \NP\ and the reduction from \SAT to \csps is constructed in polynomial time. Thus, by Lemma~\ref{lem:sat-csp} we immediately get the following:

\begin{proposition}
\label{props}
\csps is \NP-complete.
\end{proposition}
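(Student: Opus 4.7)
The plan is to assemble the proposition from two ingredients that are essentially already in place. First I would argue \NP-membership: a candidate permutation $\pi \in S_n$ serves as a polynomial-size certificate, and a verifier can confirm in $O(n)$ time that $\pi$ is a permutation, that $\pi_i \in \allowed{i}$ for each $i$, and that $\pi_i \prec_i \pi_{i+1}$ for each $1 \le i \le n-1$.

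For \NP-hardness, the correctness of the reduction is delivered by Lemma~\ref{lem:sat-csp}. What remains is to confirm that the construction from \SAT described above is polynomial-time computable. I would go through the gadgets one by one and count: the memory-cell construction contributes $p+1 = O(d)$ positions and $2(p+1) = O(d)$ values per literal, and there are $3d$ literals, so $O(d^2)$ positions and values in total; the clause-evaluation step adds $O(d)$ positions; the variable-consistency step introduces two comparison positions per pair of co-occurring literals, contributing $O(d^2)$ further positions; finally, the balancing of positions versus values and the insertion of dummies to handle consecutive positions without ordering constraints each add only a linear overhead in the current instance size. The total is therefore polynomial in the \SAT input, and since every gadget is explicit, the whole reduction can be carried out in polynomial time.

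Combining \NP-membership with polynomial-time reducibility from \SAT and Lemma~\ref{lem:sat-csp} yields \NP-completeness. No serious obstacle is anticipated. The only mildly delicate point would be checking that the dummy positions inserted during balancing and during missing-constraint removal neither create nor destroy solutions to the original constraints. Since each such dummy position carries a unique value that appears in no other position's allowed set, its placement is forced, and so these positions preserve satisfiability by construction, leaving the equivalence of the two instances intact.
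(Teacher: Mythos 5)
Your proposal is correct and follows essentially the same route as the paper, which likewise establishes \NP-membership via the permutation certificate and obtains hardness by combining the polynomial-time constructibility of the \SAT-to-\csps reduction with Lemma~\ref{lem:sat-csp}. Your explicit size accounting of the gadgets and the remark on forced dummy positions merely spell out details the paper leaves implicit.
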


Observe that in the reduction from \SAT to \csps we are explicitly using linear order constraints only for the memory cell gadget. Thus it is possible to define the values such that for any $1 \le i \le m, 1 \le j \le 3$:
$$\fcell{i,j} < (\fcell{i,j}+1) < \ldots < (\fcell{i,j}+p) < \tcell{i,j} < (\tcell{i,j}+1) < \ldots < (\tcell{i,j}+p)$$
and have every explicitly written $\prec_k$ be the ordering of integers. Additionally, we can make all the \mem{} positions smaller than any other ones, and possible values $\fcell{i,j}$ and $\tcell{i,j}$ are in the same order as for the corresponding positions, \ie $\mem{i,j} < \mem{i',j'}$ iff $\tcell{i,j} < \fcell{i',j}$. This way we make the reduction work so that the only linear order constraint used is monotonicity, it is used on some prefix of positions, and in this prefix even if there is no constraint, the possible values still satisfy monotonicity. Then, consider the following problem:

\begin{defeo}{\textnormal{\cps}:}
\ \\\\
\textbf{Input:} Integer $n>0$, $n$ sets $\allowed{1},\ldots,\allowed{n} \subseteq \{1,\ldots,n\}$ and integer $0 \le k \le n$.\\\\
\noindent\textbf{Question:} Is there $\pi \in S_n$ such that for each $1 \le i \le n$:
$\pi_i \in \allowed{i}$
and for each $1 \le j \le k$:
$\pi_{j} < \pi_{j+1}$
are satisfied?
\end{defeo}

It is worth noting, that both permutation reconstruction problems presented here have quite a natural interpretation in terms of matchings in bipartite graphs: each position $i$ in permutation corresponds to some vertex $a_i$, and each value $j$ corresponds to some vertex $b_j$, where we connect by edge $(a_i,b_j)$ iff $j \in \allowed{i}$.
Such a problem itself is naturally in \Poly, however additional linear order constrains we impose transform it into \NP-complete one.
 
 For example, \cps reduces to:
  
\begin{figure}[t!]
\centering\includegraphics[scale=1.2]{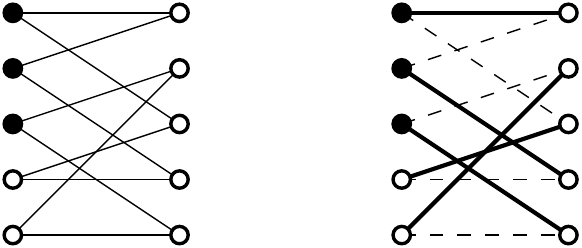}
\caption{An example of an instance of \noncrossing (on the left). We require that matching edges incident to blacked nodes are non-crossing.  A matching satisfying the constraints is on the right.}
\label{fig:noncrossing}
\end{figure}
  
\begin{defeo}{\textnormal{\noncrossing}:}
\label{nonc}
\ \\\\
\textbf{Input:} Ordered bipartite graph $G = (U,V,E)$ with orderings sets $U = (a_1,\ldots,a_n), V = (b_1,\ldots,b_n)$, and a subset $W \subseteq U$.
\\\\
\noindent\textbf{Question:} Is there a perfect matching $M \subseteq E$, such that the restricted matching $M' = M \cap (W \times V)$ is non-crossing, \ie if $(a_i,b_j),(a_k,b_l) \in M'$ then $i<j$ iff $k<l$?
\end{defeo}

\noindent An example of an instance to this problem is presented on a Figure~\ref{fig:noncrossing}. We have the following immediate corollary of Proposition~\ref{props}:

\begin{corollary}
\cps  and \noncrossing are both \NP-complete.
\end{corollary}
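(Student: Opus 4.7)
My plan is to obtain both \NP-completeness results from Proposition~\ref{props}, via two essentially cosmetic reductions.

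First I would handle \cps. Membership in \NP is immediate, since a candidate permutation $\pi$ is a polynomial-size certificate and all of its constraints can be verified in linear time. For \NP-hardness I would appeal directly to the observation made in the paragraph preceding the \cps definition: after the value renaming described there, the reduction behind Proposition~\ref{props} produces \csps instances in which (i) every explicit order $\prec_k$ coincides with the standard integer order $<$, (ii) these constraints appear only on a prefix of positions, and (iii) even the ``missing'' constraints within that prefix are automatically satisfied by the chosen value ranges. Setting $k$ to the length of that prefix therefore already turns the constructed instance, without further modification, into a valid \cps instance equivalent to the original \SAT input.

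Next I would reduce \cps to \noncrossing using the matching interpretation already sketched in the paper. Given a \cps instance $(n, H_1,\ldots,H_n,k)$, I would form the ordered bipartite graph $G=(U,V,E)$ with $U=(a_1,\ldots,a_n)$, $V=(b_1,\ldots,b_n)$, and $(a_i,b_j)\in E$ iff $j\in H_i$, and set $W=\{a_1,\ldots,a_{k+1}\}$. Perfect matchings $M$ of $G$ correspond bijectively to permutations $\pi\in S_n$ with $\pi_i\in H_i$ via $\pi_i=j\iff(a_i,b_j)\in M$, and the non-crossing condition on $M\cap(W\times V)$ expresses exactly the chain $\pi_1<\pi_2<\cdots<\pi_{k+1}$, which is equivalent to the monotonicity constraints required by \cps. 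Membership of \noncrossing in \NP is again clear, since a perfect matching is a polynomial-size certificate and non-crossingness on $W$ can be verified by inspecting pairs of matched edges incident to $W$.

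I do not anticipate a genuine obstacle in either direction: both reductions are essentially translations between equivalent descriptions of the same combinatorial object. The one detail I would be careful about is the indexing in the second reduction, since the $k$ monotonicity constraints $\pi_j<\pi_{j+1}$ for $j=1,\ldots,k$ constrain $k+1$ consecutive positions, so $W$ must be $\{a_1,\ldots,a_{k+1}\}$ rather than $\{a_1,\ldots,a_k\}$.
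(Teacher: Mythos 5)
Your proposal is correct and follows exactly the route the paper intends: the corollary is stated as immediate from the preceding remarks, namely that the reduction behind Proposition~\ref{props} can be arranged to use only the integer order on a prefix (giving \cps), followed by the straightforward translation of a \cps instance into an ordered bipartite graph with $W$ the first $k+1$ left vertices (giving \noncrossing). Your attention to the $k$ versus $k+1$ indexing is a detail the paper glosses over, but there is no substantive difference in approach.
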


\section{All Permutations Supersequence}

Now we are ready to show hardness of \allperm problem. We will do it by analyzing the complementary problem:

\begin{defeo}{\textnormal{\notallperm}:}
\ \\\\
\textbf{Input:} Integer $n>0$ and word $T$ over alphabet $\{1,\ldots,n\}$.\\\\
\noindent\textbf{Question:} Is there $\pi \in S_n$ such that $\pi \not\sqsubseteq T$?
\end{defeo}

\begin{theorem}
\label{permnp}
\notallperm is \NP-complete.
\end{theorem}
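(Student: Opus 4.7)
The plan is to establish \NP\ membership and \NP-hardness; combined they yield \NP-completeness of \notallperm. Membership is routine: given $(n,T)$, a polynomial-size certificate is a permutation $\pi \in S_n$, and the greedy subsequence test verifies $\pi \not\sqsubseteq T$ in time $O(|T|)$.

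For hardness I would reduce from \cps, which is \NP-complete by the preceding corollary. Given a \cps\ instance specified by sets $H_1,\ldots,H_n \subseteq \{1,\ldots,n\}$ and a prefix-monotonicity bound $k$, I would construct in polynomial time a word $T$ over an extended alphabet $\{1,\ldots,N\}$ with $N$ polynomial in $n$, such that some $\pi \in S_N$ satisfies $\pi \not\sqsubseteq T$ if and only if the \cps\ instance is satisfiable. The word $T$ would have a block structure $T = B_1 B_2 \cdots B_n$ separated (and padded) by unique ``separator'' symbols whose role is to force any candidate embedding $\pi \sqsubseteq T$ to place its $i$-th symbol inside block $B_i$. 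The content of $B_i$ would consist of the values of $\overline{H_i}$ together with auxiliary symbols encoding the prefix monotonicity constraint $\pi_i < \pi_{i+1}$ for $i \le k$. Thus greedy embedding at block $B_i$ succeeds precisely when $\pi_i$ is a forbidden value at position $i$, or a previous monotonicity violation has already ``released'' the necessary extra resources.

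Correctness proceeds in two directions. In the forward direction, if $\pi$ is a valid \cps\ solution (so $\pi_i \in H_i$ for every $i$ and the prefix is increasing), the separator discipline forces greedy to align $\pi_i$ with $B_i$; since $\pi_i \in H_i$ the value is absent from $\overline{H_i}$, and since no prefix monotonicity violation occurs the swap gadgets also fail to provide it, so the embedding is stuck and hence $\pi \not\sqsubseteq T$. Conversely, if $\pi$ violates some \cps\ constraint, then either some $\pi_i \in \overline{H_i}$ (in which case $B_i$ directly furnishes $\pi_i$) or some $\pi_j \ge \pi_{j+1}$ with $j \le k$ (in which case the monotonicity-swap gadgets inside $B_j, B_{j+1}$ activate), and in either case a complete greedy embedding of $\pi$ into $T$ exists.

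The main obstacle is the simultaneous design of the separators, the block contents, and the alphabet padding so that both completeness (every invalid $\pi$ embeds) and soundness (every valid $\pi$ fails) hold. In particular, the separator structure must rigidly enforce the block-respecting embedding of an arbitrary $\pi \in S_N$, while the monotonicity gadget inside the blocks must locally expose every prefix monotonicity violation without inadvertently admitting an embedding of a truly valid \cps\ permutation; tuning these components against each other, and verifying the reduction is polynomial in the size of the original \cps\ instance, is the technical heart of the argument.
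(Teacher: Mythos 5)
Your membership argument is fine, and reducing from \cps rather than from \csps is legitimate in principle, but the hardness half has a genuine gap: the construction you describe is not actually given, and the one ingredient you do commit to --- unique separator symbols padding the blocks --- appears to be unworkable. In \notallperm the witness ranges over \emph{all} of $S_N$, so if $T$ contains symbols occurring exactly once (or, more generally, symbols whose occurrences are confined to fixed narrow regions), then some permutation ordering two such symbols oppositely to their placement in $T$ fails to embed for reasons having nothing to do with the \cps instance, and the reduction outputs a YES instance unconditionally. Relatedly, your correctness sketch only discusses permutations that ``correspond to'' \cps assignments; the hard direction is showing that when the instance is unsatisfiable \emph{every} $\pi \in S_N$ --- including arbitrary interleavings of separators, padding, and content symbols --- embeds into $T$. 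That universal quantification is exactly what your separator discipline must survive, and you defer it as ``the technical heart of the argument'' rather than resolving it.

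The paper sidesteps all of this by using no separators and no alphabet extension. It reduces from \csps and builds, over the same alphabet $\{1,\ldots,n\}$, the word $W = \textsc{ENC}(\overline{H_1})\,\textsc{ORD}(\prec_1)^{R}\,\textsc{ENC}(\overline{H_2})\cdots \textsc{ORD}(\prec_{n-1})^{R}\,\textsc{ENC}(\overline{H_n})$, where each $\textsc{ORD}(\prec_i)^R$ is a full permutation block listing the order $\prec_i$ reversed. The block-respecting behaviour you wanted from separators is instead obtained by counting: a subsequence of $W$ that touches no $\textsc{ENC}(\overline{H_j})$ block and never places two consecutive symbols in the same $\textsc{ORD}(\prec_j)^R$ block uses at most one position from each of the $n-1$ order blocks, hence has length at most $n-1$ and cannot be a permutation. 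So any embedded $\pi \in S_n$ must either put $\pi_j$ into $\textsc{ENC}(\overline{H_j})$ (violating $\pi_j \in H_j$) or put $\pi_j,\pi_{j+1}$ into the same reversed-order block (violating $\pi_j \prec_j \pi_{j+1}$), and conversely every constraint-violating $\pi$ embeds. If you want to salvage your approach, you would need to replace the unique separators by such full-alphabet blocks; as written, the proposal is missing the central idea.
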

\begin{proof}
\notallperm is clearly in \NP. Thus, it is enough  to construct a \Poly reduction from \csps to \notallperm.
Let us take an instance of \csps. Let us denote, given linear order $\prec$, by 
$\textsc{ORD}(\prec) = i_1 i_2 \ldots i_n$ a word build from the permutation defining the order, that is $i_1 \prec i_2 \prec \ldots \prec i_n$.
Similarly, given a set $H \subseteq \{1,\ldots,n\}$, let $\textsc{ENC}(H) = i_1 \ldots i_{\mid H \mid}$ be an arbitrary word containing every element of $H$. We also denote by $\overline{H} = \{1,2,\ldots,n\} \setminus H$.

Consider a word of the following form built from \csps instance:
$$W = \textsc{ENC}(\overline{H_1}) (\textsc{ORD}(\prec_1)^{R}) \textsc{ENC}(\overline{H_2}) (\textsc{ORD}(\prec_2)^{R}) \ldots (\textsc{ORD}(\prec_{n-1})^{R}) \textsc{ENC}(\overline{H_n}).$$
Clearly, $W$ can be constructed in \Poly.

We will show that for any given $\pi \in S_n$, $\pi$ is a feasible solution to the \csps instance iff $\pi$ is not a subword of $W$.
Observe that each $\textsc{ORD}(\prec_i)$ is a permutation of $\{1,\ldots,n\}$, so $W$ clearly contains as a subword any word of length $n-1$ (not necessarily a permutation).

To prove the \emph{if} part, observe that if $\pi$ is not a solution to the \csps instance, it must be for the following two reasons:
\begin{itemize}
\item For some $i$, $\pi_i \not\in H_i$. We have
  $\pi \sqsubseteq W$ for the following reason:
$\pi_j \in \textsc{ORD}(\prec_j)^R$ for $1\le j<i$, $\pi_i \in \textsc{ENC}(\overline{H_i})$ and $\pi_{j+1} \in \textsc{ORD}(\prec_j)^R$ for $i \le j < n$.
\item For some $i$, $\pi_i \not\prec_i \pi_{i+1}$. But then $\pi_i,\pi_{i+1}$ are exactly in this order in $\textsc{ORD}(\prec_i)^R$, hence, matching $\pi_j$ for $j < i$ and for $j > i$
  as in the previous case.
\end{itemize}

To prove the \emph{only if} part, let us take a $\pi \sqsubseteq W$. Let $i_1<i_2<\ldots<i_n$ be such that $\pi = W[i_1]W[i_2]\ldots W[i_n]$. At least one of the following conditions is fulfilled (as any subsequence contradicting both conditions at once can consist of positions from $\textsc{ORD}(\prec_i)^R$, one position per value of $i$, thus has length $n-1$ at most):
\begin{itemize}
\item There is $j$ such that position $i_j$ in $W$ is part of $\textsc{ENC}(\overline{H_j})$. But then $\pi_j \not\in H_j$, meaning that $\pi$ is not a solution to this \csps instance.
\item There is $j$ such that both $i_j$ and $i_{j+1}$ positions in $W$ are part of $\textsc{ORD}(\prec_j)^R$. But that implies $\pi_j \not\prec_j \pi_{j+1}$, thus $\pi$ is not a solution to this \csps instance. \qedhere
\end{itemize}
\end{proof}

As an immediate corollary of Theorem~\ref{permnp}, we have: 

\begin{corollary}
\allperm is \coNP-complete.
\end{corollary}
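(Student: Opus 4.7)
The plan is essentially to observe that \allperm is the complement of \notallperm in the most literal sense: an input $(n,T)$ is a \texttt{yes}-instance of \allperm precisely when every $\pi \in S_n$ satisfies $\pi \sqsubseteq T$, which is exactly the negation of the existential question defining \notallperm. Hence the two languages are complements of each other (modulo trivial encoding considerations).

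From this observation the corollary follows formally in two steps. First, I would invoke Theorem~\ref{permnp}, which establishes that \notallperm is \NP-complete. Second, I would use the general fact that if a language $L$ is \NP-complete then its complement $\overline{L}$ is \coNP-complete: membership of $\overline{L}$ in \coNP is immediate since $L \in \NP$, and \coNP-hardness of $\overline{L}$ follows because any polynomial-time reduction from an arbitrary $L' \in \coNP$ to $\overline{L}$ is obtained by reducing $\overline{L'} \in \NP$ to $L$ via \NP-completeness of $L$, and the same reduction witnesses $L' \le_p \overline{L}$.

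For concreteness I would spell out the two ingredients directly in the context of \allperm. Membership in \coNP is witnessed by a short certificate of non-membership, namely a permutation $\pi \in S_n$ with $\pi \not\sqsubseteq T$: verifying this takes polynomial time by the folklore greedy subsequence test. For \coNP-hardness, any \coNP\ language reduces in polynomial time to \allperm by composing the standard reduction from its complement to \notallperm (which exists by Theorem~\ref{permnp}) with the identity embedding that turns a \notallperm\ instance into an \allperm\ instance with reversed yes/no answer.

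There is really no technical obstacle here: all the work has been done in Theorem~\ref{permnp} via the reduction from \csps, and the remainder is the standard syntactic manipulation converting \NP-completeness of a problem into \coNP-completeness of its complement. The only thing to be slightly careful about is to state the complementation explicitly so that the reader sees that \allperm and \notallperm are complementary as decision problems on identical inputs.
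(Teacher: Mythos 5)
Your proposal is correct and matches the paper's intent exactly: the paper presents the corollary as an immediate consequence of Theorem~\ref{permnp}, relying on precisely the complementation argument you spell out (that \allperm and \notallperm are complementary decision problems, and the complement of an \NP-complete language is \coNP-complete). You have simply made explicit the standard steps the paper leaves implicit.
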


\section{Conclusion}
We proved the hardness of a problem of deciding whether a sequence is universal to every permutation with respect to having as subsequence. Somehow related to testing universality of a sequence with respect to certain combinatorial structures are following open questions on hardness of testing whether a sequence is an \emph{universal traversal sequence} (Aleliunas \etal \cite{4568017}) and whether a sequence is an  \emph{universal exploration sequence} (Koucký \cite{Koucky2002717}), that is whether a sequence defines a series of moves capable of exploring every (fixed size) connected graph.

\paragraph{Acknowledgements.} We are grateful both to Antoine Amarilli and to Jukka Suomela for providing with valuable suggestions and for fruitful discussions.
\bibliography{bib}

\begin{thebibliography}{10}

\bibitem{oeisA062714}
Online encyclopedia of integer sequences, sequence {A}062714.
\newblock \url{http://oeis.org/A062714}.
\newblock Accessed: 2015-06-08.

\bibitem{Adleman:1974:SPS:2625528.2625790}
L.~Adleman.
\newblock Short permutation strings.
\newblock {\em Discrete Math.}, 10(2):197--200, Jan. 1974.

\bibitem{4568017}
R.~Aleliunas, R.~M. Karp, R.~Lipton, L.~Lov{\'{a}}sz, and C.~Rackoff.
\newblock Random walks, universal traversal sequences, and the complexity of
  maze problems.
\newblock In {\em Foundations of Computer Science, 1979., 20th Annual Symposium
  on}, pages 218--223, Oct 1979.

\bibitem{cstheory}
A.~Amarilli.
\newblock Recognizing sequences with all permutations of {$\{1,\dotsc,n\}$} as
  subsequences.
\newblock \url{http://cstheory.stackexchange.com/questions/31559}.
\newblock Accessed: 2015-06-08.

\bibitem{DBLP:journals/combinatorics/Biasi14}
M.~D. Biasi.
\newblock Permutation reconstruction from differences.
\newblock {\em Electr. J. Comb.}, 21(4):P4.3, 2014.

\bibitem{Chvatal:1972:SCR:891957}
V.~Chv{\'{a}}tal, D.~A. Klarner, and D.~E. Knuth.
\newblock Selected combinatorial research problems.
\newblock Technical report, Stanford, CA, USA, 1972.

\bibitem{Fredman197529}
M.~L. Fredman.
\newblock On computing the length of longest increasing subsequences.
\newblock {\em Discrete Mathematics}, 11(1):29--35, 1975.

\bibitem{doi:10.1137/0130040}
G.~Galbiati and F.~P. Preparata.
\newblock On permutation-embedding sequences.
\newblock {\em SIAM Journal on Applied Mathematics}, 30(3):421--423, 1976.

\bibitem{DBLP:journals/jct/KleitmanK76}
D.~J. Kleitman and D.~J. Kwiatkowski.
\newblock A lower bound on the length of a sequence containing all permutations
  as subsequences.
\newblock {\em J. Comb. Theory, Ser. {A}}, 21(2):129--136, 1976.

\bibitem{Koucky2002717}
M.~Koucký.
\newblock Universal traversal sequences with backtracking.
\newblock {\em Journal of Computer and System Sciences}, 65(4):717--726, 2002.
\newblock Special Issue on Complexity 2001.

\bibitem{Koutas:1975:SSC:2625519.2625687}
P.~J. Koutas and T.~C. Hu.
\newblock Shortest string containing all permutations.
\newblock {\em Discrete Math.}, 11(2):125--132, Jan. 1975.

\bibitem{Maier:1978:CPS:322063.322075}
D.~Maier.
\newblock The complexity of some problems on subsequences and supersequences.
\newblock {\em J. ACM}, 25(2):322--336, Apr. 1978.

\bibitem{Mohanty:1980:SSC:2647973.2648192}
S.~P. Mohanty.
\newblock Shortest string containing all permutations.
\newblock {\em Discrete Math.}, 31(1):91--95, Jan. 1980.

\bibitem{opac-b1004216}
M.~Newey.
\newblock Notes on a problem involving permutations as subsequences.
\newblock Technical report, Stanford, CA, USA, 1973.

\bibitem{DBLP:journals/combinatorics/Radomirovic12}
S.~Radomirovi{\'{c}}.
\newblock A construction of short sequences containing all permutations of a
  set as subsequences.
\newblock {\em Electr. J. Comb.}, 19(4):P31, 2012.

\bibitem{superseq-npc}
K.-J. R{\"{a}}ih{\"{a}} and E.~Ukkonen.
\newblock The shortest common supersequence problem over binary alphabet is
  {NP}-complete.
\newblock {\em Theor. Comput. Sci.}, 16:187--198, 1981.

\bibitem{DBLP:journals/ipl/WidmayerW85}
P.~Widmayer and C.~K. Wong.
\newblock An optimal algorithm for the maximum alignment of terminals.
\newblock {\em Inf. Process. Lett.}, 20(2):75--82, 1985.

\bibitem{permutationsums}
W.~Yu, H.~Hoogeveen, and J.~Lenstra.
\newblock Minimizing makespan in a two-machine flow shop with delays and
  unit-time operations is {NP}-hard.
\newblock {\em Journal of Scheduling}, 7(5):333--348, 2004.

\end{thebibliography}
\end{document}